\documentclass[runningheads]{llncs}
\usepackage[T1]{fontenc}
\usepackage{graphicx}
\usepackage{amsmath,amssymb}
\usepackage{algorithm}
\usepackage[noend]{algpseudocode}
\usepackage{array}
\usepackage{xcolor}
\usepackage{orcidlink}
\usepackage{subcaption}
\newcommand{\share}[1]{[\![#1]\!]}
\DeclareMathOperator{\Gen}{Gen}
\DeclareMathOperator{\Eval}{Eval}
\DeclareMathOperator{\Encode}{Encode}
\DeclareMathOperator{\Decode}{Decode}

\DeclareMathOperator{\DPF}{DPF}
\DeclareMathOperator{\DCF}{DCF}
\DeclareMathOperator{\DDH}{DDH}
\DeclareMathOperator{\Real}{Real}
\DeclareMathOperator{\Sim}{Sim}
\DeclareMathOperator{\Leak}{Leak}
\DeclareMathOperator{\Ideal}{Ideal}
\DeclareMathOperator{\Output}{Output}

\begin{document}
\title{DDH-based schemes for multi-party\\Function Secret Sharing}
\titlerunning{DDH-based schemes for multi-party FSS}
%
\author{Marc Damie\inst{1,2}\orcidlink{0000-0002-9484-4460}\thanks{Corresponding author: \email{m.f.d.damie@utwente.nl}}\and
      Florian Hahn\inst{1}\orcidlink{0000-0003-4049-5354}\and
      Andreas Peter\inst{3}\orcidlink{0000-0003-2929-5001}\and
      Jan Ramon\inst{2}}
\authorrunning{M. Damie et al.}
%
\institute{University of Twente, The Netherlands\and
Inria, France \and
Carl von Ossietzky Universität Oldenburg, Germany}
\maketitle              
\begin{abstract}
  Function Secret Sharing (FSS) schemes enable sharing efficiently secret functions.
  Schemes dedicated to point functions, referred to as Distributed Point Functions (DPFs), are the center of FSS literature thanks to their numerous applications including private information retrieval, anonymous communications, and machine learning.
  While two-party DPFs benefit from schemes with logarithmic key sizes, multi-party DPFs have seen limited advancements: $O(\sqrt{N})$ key sizes (with $N$, the function domain size) and/or exponential factors in the key size.

  We propose a DDH-based technique reducing the key size of existing multi-party schemes.
  In particular, we build an honest-majority DPF with $O(\sqrt[3]{N})$ key size.
  Our benchmark highlights key sizes up to $10\times$ smaller (on realistic problem sizes) than state-of-the-art schemes.
  Finally, we extend our technique to schemes supporting comparison functions.
  \keywords{Function Secret Sharing \and FSS \and Distributed Point Function \and DPF \and DDH \and Multi-Party Computations.}
\end{abstract}

\section{Introduction}
Secret sharing \cite{shamir_how_1979} is a popular cryptographic primitive to perform multi-party computations (MPC) \cite{evans_pragmatic_2018}.
This primitive enables splitting a secret value into several shares, revealing individually no information about the secret.
While classic secret sharing focused on scalar secret values, Function Secret Sharing (FSS) \cite{gilboa_distributed_2014} generalizes the concept to share secret functions.
FSS schemes consists in three algorithms: $\Gen$, $\Eval$, and $\Decode$.
$\Gen$ outputs $p$ FSS keys based on a secret function $f$; each shareholder receives one key.
$\Eval$ takes as input an FSS key $k_i$ and a point $x$, and outputs a share of $f(x)$ (referred to as $\share{f(x)}_i$).
Finally, $\Decode$ takes as input $p$ shares $\{\share{f(x)}_1, \dots,\share{f(x)}_p\}$  and outputs $f(x)$.

Significant efforts \cite{boyle_function_2015,boyle_function_2016,bunn_cnf-fss_2022,corrigan-gibbs_riposte_2015,gilboa_distributed_2014} aimed to share point functions (i.e., a function equal to zero everywhere, except on a point $\alpha$).
These so-called ``Distributed Point Functions'' (DPFs) were initially proposed to build private information retrieval (PIR) protocols \cite{gilboa_distributed_2014} and have been later used in other domains such as anonymous communications \cite{corrigan-gibbs_riposte_2015} or privacy-preserving machine learning \cite{wagh_pika_2022}.

Recently, there is also a growing interest in schemes supporting comparison functions: functions such that $f(x) = \beta$ when $x\le \alpha$, $0$ otherwise.
These schemes called ``Distributed Comparison Functions'' (DCF) have applications notably in MPC pre-computations \cite{boyle_secure_2019} or private statistics \cite{barczewski_differentially_2025}.

\paragraph{Related works}
The main goal of FSS works is to minimize the key size.
In particular, existing works systematically studied the influence of the function domain size $N$ on the key size.
While schemes are often described for a generic group $\mathbb{G}$, many works focused on bit-string outputs \cite{boyle_function_2015,bunn_cnf-fss_2022}: $\mathbb{G} = (\mathbb{F}_2)^l$; motivated by applications in PIR.
However, novel applications in private statistics \cite{barczewski_differentially_2025,boneh_lightweight_2021} or MPC precomputations \cite{boyle_secure_2019} require prime fields $\mathbb{F}_q$ (e.g., to sum several shared functions).
Such constraints emphasized under-studied scalability issues in some multi-party schemes \cite{boyle_function_2015,bunn_cnf-fss_2022} that have an exponential dependency in $q$.
Recent works \cite{boyle_function_2022,kumar_compact_2024} also highlighted this issue.

\paragraph{2/3-party DPF}
Boyle et al. \cite{boyle_function_2015} presented the reference 2-party scheme, with an $O(\log N \log q)$ key size.
This scheme has been further optimized by various works: key size optimization \cite{boyle_function_2016}, ``incremental'' DPF \cite{boneh_lightweight_2021}, reusable keys \cite{boyle_programmable_2022}, MPC-based key generation \cite{doerner_scaling_2017}, verifiable DPF \cite{de_castro_lightweight_2022}.
Thanks to its logarithmic key size, this scheme is used in many application papers including in anonymous communications \cite{newman_spectrum_2022}, private statistics \cite{boneh_lightweight_2021}, and access control \cite{servan-schreiber_private_2023}.

Bunn et al. \cite{bunn_efficient_2020} described a three-party scheme with $O(\sqrt{N})$ key size.
Later, \cite{bunn_cnf-fss_2022,zyskind_high-throughput_2024} further improved it to obtain $O(\log N)$ key sizes.

\paragraph{Multi-party DPF} Now that 2-party and 3-party schemes have logarithmic key sizes, improving arbitrary $p$-party schemes is the natural next step.
As we focus on multi-party FSS, we will skim through existing schemes, their advantages and disadvantages.
Table \ref{tab:dpf} of Section \ref{sec:dpf} summarizes this overview.

First, Boyle et al. \cite{boyle_function_2015} designed a scheme based on Pseudo-Random Generators (PRG) with $O(\sqrt{N}q^{\frac{p-1}{2}}\log q)$ key size.
The exponential factor $q^{\frac{p-1}{2}}$ (with $q = |\mathbb{G}|$) makes it impractical for applications requiring arbitrary output groups (e.g., private histograms \cite{boneh_lightweight_2021}).
This weakness was already highlighted by \cite{kumar_compact_2024}.

Second, Corrigan-Gibbs et al. \cite{corrigan-gibbs_riposte_2015} introduced a DDH-based scheme with $O(\sqrt{N}\log q)$ key size.
Recently, Kumar et al. \cite{kumar_compact_2024} further improved this scheme by removing a constant factor, but requires a \emph{trusted share decoder}.

Third, Bunn et al. \cite{bunn_cnf-fss_2022} designed an honest-majority scheme (based on \cite{boyle_function_2015}) with $O(\sqrt[4]{N}q^{\frac{p-1}{2}}\log q)$ key size.
It has the best dependency on $N$ (i.e., $O(\sqrt[4]{N})$), but it inherits the exponential factor of \cite{boyle_function_2015}.
Papers presenting PRG-based schemes \cite{boyle_function_2015,bunn_cnf-fss_2022} only provide algorithms for $q=2$, but they can easily be generalized to an arbitrary $q$ (as explained in \cite{boyle_function_2022}).

Fourth, Bunn et al. \cite{bunn_cnf-fss_2022} also introduced an honest-majority scheme with $O(\sqrt{N}\log q)$ key size.
The main advantage is \textbf{its information-theoretic security}.
Other works also investigated information-theoretic DPF schemes \cite{boyle_information-theoretic_2022,kruglik_verifiable_2024,li_efficient_2023}, but \cite{bunn_cnf-fss_2022} is the only with practical key sizes.

Finally, concurrently to our work, two other papers \cite{goel_multiparty_2025,krips_multi-party_2025} have improved multiparty DPF.
On the one hand, Goel et al. \cite{goel_multiparty_2025} built a PRG-based scheme replacing the exponential factor present in \cite{boyle_function_2015} with a polynomial factor: $O(\sqrt{N}\cdot {p^3\lambda^4})$.
On the other hand, Krips and Pullonen-Raudvere \cite{krips_multi-party_2025} introduced two new hardness assumptions to build a multi-party scheme with logarithmic key size.

\paragraph{DCF} Boyle et al. \cite{boyle_function_2015} adapted their 2-party DPF to build a DCF scheme with $O(\log N \log q)$ key size.
They also adapted their multi-party DPF to build a multi-party DCF with $O(\sqrt{N}q^{\frac{p-1}{2}}\log q)$; suffering from the same scalability issues as their DPF.
Recently, Kumar et al. \cite{kumar_compact_2024} proposed multi-party DCF inspired by the DPF of \cite{corrigan-gibbs_riposte_2015}.
The literature describes no other DCF; the DCF baselines are then weaker than those in DPF.

\paragraph{Gap in the literature} When relying on standard hardness assumptions (which then excludes \cite{krips_multi-party_2025}), the existing literature leaves two choices for multi-party DPF: either an $O(\sqrt[4]{N})$ key size with exponential factors \cite{bunn_cnf-fss_2022}, or an $O(\sqrt{N})$ key size without exponential factors \cite{bunn_cnf-fss_2022,corrigan-gibbs_riposte_2015,damie_eliminating_2025}.

\textbf{Our goal is to provide an intermediary solution, a better trade-off}: $O(\sqrt[3]{N})$ key size without exponential factors.
Similarly, we want to extend these results to DCF, and provide the same choice range as in DPF.

We would like to emphasize the contribution of \cite{krips_multi-party_2025} which, for the first time, presents a multi-party scheme with logarithmic key size.
Unfortunately, it requires the introduction of two new hardness assumptions.
Further research is necessary to estimate the parameters under which these assumptions hold.
Our paper, on the other hand, relies on standard, widely-established assumptions.

\paragraph{Our Contributions}
\begin{enumerate}
  \item We build a \textbf{multi-party DDH-based DPF with $O(\sqrt[3]{N})$ key size}. Our scheme provides keys \textbf{up to $\times$10 smaller} than existing works.
  \item We \textbf{extend our approach to build a DCF} with $O(\sqrt[3]{N})$ key size.
  \item As DDH-based schemes require encoding the secret as a DDH group element, we present \textbf{two encodings, discuss their properties and applications}.
\end{enumerate}

\section{Definitions}
Let $p$ be the number of parties/shareholders, $m$ be the number of dishonest parties.
Let $\mathbb{F}_q$ be a prime field and let $\mathbb{G}$ be a cyclic group.
Let $N$ be the function domain size and $1^\lambda$ a security parameter.
Finally, let $s \xleftarrow{R} S$ be the uniform sampling of an element $s$ from the set $S$.
Let $\share{x}$ be a share of $x$, and $g^{\share{x}}$ to $g$ to the power $\share{x}$. 

\subsection{Threat model}
MPC protocols are referred to as secure if they preserve output correctness and input privacy in presence of an adversary.
Subsection \ref{subsec:def-fss} provides definitions of correctness and privacy specific to FSS.

Like most FSS works \cite{boyle_function_2015,boyle_function_2016,bunn_cnf-fss_2022,gilboa_distributed_2014,kumar_compact_2024}, we focus on semi-honest adversaries; adversaries following the protocol and \emph{passively} infering secret information.

Moreover, secure protocols are characterized by the tuple $(m,p)$.
The security of an $(m,p)$-secure protocol is guaranteed only if the number of dishonest parties is below than or equal to $m$.
The literature distinguishes honest-majority protocols ($m<p/2$) from dishonest-majority protocols ($m\ge p/2$).

\subsection{Function Secret Sharing}
\label{subsec:def-fss}
Function secret sharing (FSS) \cite{boyle_function_2015} generalizes the concept of secret sharing to secret functions.
Each FSS scheme can share functions from a specific function family.
A function family $\mathcal{F}$ \cite{boyle_function_2022} is a pair $(P_\mathcal{F}, E_\mathcal{F})$ where $P_\mathcal{F} \subseteq \{0,1\}^*$ is an infinite collection of function descriptions $\hat{f}$, and $E_\mathcal{F}$: $P_\mathcal{F} \times \{0, 1\}^* \rightarrow \{0, 1\}^*$ is a polynomial-time algorithm defining the function described by $\hat{f}$.

In other words, each function description $\hat{f} \in P_\mathcal{F}$ describes a corresponding function $f$ such that $f(x) = E_\mathcal{F}(\hat{f}, x)$.
A description $\hat{f}$ for such functions is the tuple $(\alpha, \beta, \mathcal{X}, \mathcal{Y})$, with $\mathcal{X}$ the function domain and $\mathcal{Y}$ the output space.

Over the years, the literature has described schemes for various function families especially point functions \cite{boyle_function_2015,boyle_function_2016,boyle_information-theoretic_2022,bunn_cnf-fss_2022,corrigan-gibbs_riposte_2015,gilboa_distributed_2014} (functions $f$ such that $f(x) = \beta$ if $x=\alpha$, and $f(x)=0$ otherwise), and comparison functions \cite{boyle_function_2015,kumar_compact_2024}.
Other function families such as decision trees \cite{boyle_function_2016} have FSS schemes, but they have fewer applications than DPF and DCF.

All functions within a function family must share the same domain and output space.
Moreover, we use the notation $\mathcal{K}$ to refer to the FSS key space.

\begin{definition}
  A $p$-party FSS scheme (for a family $\mathcal{F}$) has 3 algorithms:
  \begin{itemize}
    \item $\Gen: \mathbb{N} \times P_\mathcal{F} \rightarrow \mathcal{K}^p$ takes as input a security parameter $1^\lambda \in\mathbb{N}$ and a function description $\hat{f}\in P_\mathcal{F}$, and outputs $p$ keys $k_1, \dots, k_p\in \mathcal{K}$.
    \item $\Eval: \mathcal{K} \times \mathcal{X} \rightarrow \mathbb{G}$ takes as input $k_i$ and a point $x\in\mathcal{X}$, outputs a share of $f(x)$ that we denote as $\share{f(x)}_i$.
    \item $\Decode: \mathbb{G}^p \rightarrow  \mathcal{Y}$ takes as input $p$ shares $\{\share{f(x)}_1, \dots,\share{f(x)}_p \}$ and outputs the secret $f(x)$.
  \end{itemize} 
\end{definition}

Recent FSS works \cite{bunn_cnf-fss_2022,zyskind_high-throughput_2024} generalized this definition to support threshold secret-sharing.
Our work (like most existing works \cite{boyle_function_2015,boyle_function_2016,bunn_cnf-fss_2022,corrigan-gibbs_riposte_2015}) does not require such a generalization, so we stick to the classic definitions from \cite{boyle_function_2015}.

\begin{definition}[Correctness \cite{boyle_function_2015}]
  For any function $f\in\mathcal{F}$, for any point $x\in\mathcal{X}$, we have $k_1, \dots, k_p \leftarrow \Gen(1^\lambda, \hat{f})$ and $$
  \mathbb{P}\left[\text{Decode}(\Eval(k_1, x), \dots, \Eval(k_p, x)) = f(x)\right] = 1$$
\end{definition}

\begin{definition}[Privacy \cite{boyle_function_2022}]
  \label{def:privacy}
  Let $\Leak : \{0, 1\}^* \rightarrow \{0, 1\}^*$ be a function specifying the allowable leakage. If omitted, it is understood to be $\Leak(\hat{f}) = (\mathcal{X}, \mathcal{Y})$.

  We call a $p$-party FSS scheme private if, for every set of corrupted parties $S \subseteq \{1\dots p\}$ of size $m$, there exists a PPT algorithm $\Sim$ (simulator), such that for every sequence of function descriptions from $P_\mathcal{F}$ ($\hat{f}_1, \hat{f}_2, \dots$) of size polynomial in $\lambda$, the outputs of the following experiments $\Real$ and $\Ideal$ are computationally indistinguishable:

  \begin{itemize}
    \item $\Real(1^\lambda): (k_1, \dots, k_p) \leftarrow \Gen(1^\lambda, \hat{f}_\lambda); \Output~ (k_i)_{i\in S}$
    \item $\Ideal(1^\lambda): \Output~ \Sim(1^\lambda, \Leak(\hat{f}_\lambda))$
  \end{itemize}
\end{definition}

\section{DDH-based Distributed Point Function}
\label{sec:dpf}
This section introduces a DDH-based approach to build a scheme with $O(\sqrt[3]{N})$ key size upon the information-theoretic scheme of \cite{bunn_cnf-fss_2022} which has an $O(\sqrt[2]{N})$ key size.
We call ``sub-DPF'' or ``sub-scheme'' the DPF scheme of \cite{bunn_cnf-fss_2022} on which we apply our DDH-based optimization.

As this section focuses on point functions, we use the notation $(\alpha, \beta)$ to refer to the parameters of the secret function $f$ (i.e., $f(x) = \beta$ if $x=\alpha$, $0$ otherwise).

\subsection{Scheme}
We represent the domain as a grid of dimensions: $(\sqrt[3]{N})^2\times\sqrt[3]{N}$ (see Figure \ref{fig:ec-ddh-key-dpf})
The non-zero value is in cell $(\gamma_*, \delta_*)$.

\begin{figure}
  \centering
  \includegraphics[width=.6\linewidth]{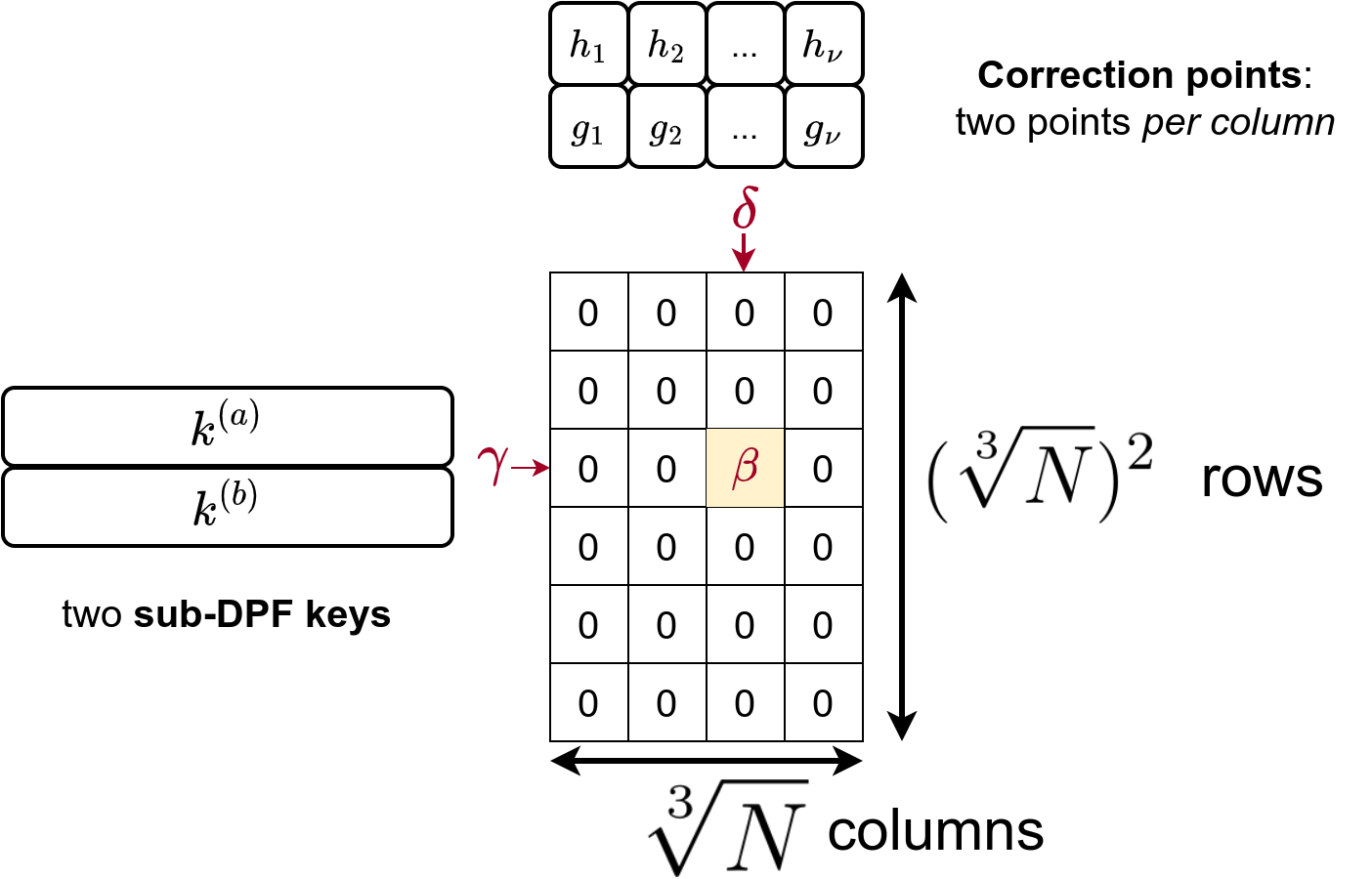}
  \caption{High-level structure of our DDH-based DPF}
  \label{fig:ec-ddh-key-dpf}
\end{figure}

$\Gen$:
Let $\mathbb{G}$ be a cyclic group of prime order $q_0$, with $g$ a generator. We assume that the DDH assumption holds in $\mathbb{G}$.
Let $g_\beta=\Encode_{\DDH}(\beta)$ be the encoding of $\beta$ in $\mathbb{G}$ (possible encodings are discussed in Section \ref{sec:secret-encoding}).
Our $\Gen$ algorithm starts by sampling one random number $r \xleftarrow{R} \mathbb{F}_{q_0}$ and compute $r_\text{inv}$ its multiplicative inverse.
This multiplicative inverse exists because $q_0$ is prime.

We then use the sub-DPF to share two secret point functions $f_a, f_b$ such that $f_a(x) = r$ and $f_b(x) = 1$, if $x=\gamma_*$, otherwise $0$.
This step outputs two sets of DPF keys $\{k_1^{(a)},\dots, k_p^{(a)}\}$ and $\{k_1^{(b)},\dots, k_p^{(b)}\}$, with each key of size $O(\sqrt[3]{N})$ thanks to the sub-DPF.
For each column $\delta$, we sample a random generator $g_{\delta} \xleftarrow{R} \mathbb{G}$.
For all $\delta\neq\delta_*$,  we set the ``correction points'' $h_\delta = g_\delta^{r_\text{inv}}$.
We set the correction points for the column $\delta_*$: $h_{\delta_*} \leftarrow g_{\delta_*}^{-r_\text{inv}} \cdot g_\beta^{r_\text{inv}}$.
Each party $i$ receives a key $k_i$ containing two DPF ``sub-keys'' $k_i^{(a)}, k_i^{(b)}$ and all corrections points.

$\Eval$:
represents $x$ as $(\gamma', \delta')$ and evaluates: $\share{s_a}_i \leftarrow \Eval_{\DPF}(k_i^{(a)}, \gamma')$ and $\share{s_b}_i \leftarrow \Eval_{\DPF}(k_i^{(b)})$.
Each party obtains $\share{f(x)}_i = h_{\delta'}^{\share{s_a}_i} \cdot g_{\delta'}^{\share{s_b}_i}$.

Algorithm \ref{alg:dpf-ddh} details our DDH-based scheme.

\begin{algorithm}[t]
  \caption{DDH-based DPF scheme}\label{alg:dpf-ddh}
  \begin{algorithmic}[1]
    \State{Let $\mathbb{G}$ be a cyclic group of prime order $q_0$ and with $g$ a generator. We assume that the DDH assumption holds in $\mathbb{G}$.}
    \State{Let $(\Gen^{(*)}_{\DPF}, \Eval^{(*)}_{\DPF}, \Decode_{+})$ be the information-theoretic DPF proposed by Bunn et al. \cite{bunn_cnf-fss_2022} (with output in $\mathbb{F}_{q_0}$).}
    \State{Let $\Encode_{\DDH}: \mathcal{Y} \rightarrow \mathbb{G}$ and $\Decode_{\DDH}: \mathbb{G} \rightarrow \mathcal{Y} $ be two functions such that $\Encode_{\DDH}(0)=g^0$ and $\Decode_{\DDH}(\Encode_{\DDH}(y)) = y$ for any $y\in\mathcal{Y}$.}
    \Statex{}
    \Function{$\Gen_{\DPF}$}{$\alpha, \beta, p, m$}
    \State{Let $g_\beta\leftarrow \Encode_{\DDH}(\beta)$.}
    \State{Let $(\gamma_*, \delta_*)$ be the position of $\alpha$ in a $(\nu)^2 \times \nu$ grid, with $\nu= \lceil\sqrt[3]{N}\rceil$}

    \State{Sample one (non-zero) $r \xleftarrow{R} \mathbb{F}_{q_0}$ and set $r_\text{inv}\leftarrow r^{-1}$.}
    \State{Generate two sets of DPF keys:}
    \Statex{\hspace{3em}$\{k_1^{(a)},\dots, k_p^{(a)}\}\leftarrow \Gen^{(*)}_{\DPF}(\gamma_*, r, p, m)$} 
    \Statex{\hspace{3em}$\{k_1^{(b)},\dots, k_p^{(b)}\}\leftarrow \Gen^{(*)}_{\DPF}(\gamma_*, 1, p, m)$.}

    \For{$\delta\in\{1\dots \nu\}, \delta\neq\delta_*$}
    \State{Sample a point $g_\delta$ from $\mathbb{G}$ and set $h_\delta \leftarrow g_\delta^{-r_\text{inv}}$.}
    \EndFor
    \State{Sample a $g_{\delta_*}$ from $\mathbb{G}$ and set $h_{\delta_*} \leftarrow g_{\delta_*}^{-r_\text{inv}} \cdot g_\beta^{r_\text{inv}}$.}
    \State{Set $k_i = (k_i^{(a)}||k_i^{(b)}||g_1||h_1||\dots||g_{\nu}||h_{\nu}), \forall i \in\{1\dots p\}$}
    \State\Return{$(k_1, \dots, k_p)$.}
    \EndFunction
    \Statex{}

    \Function{$\Eval_{\DPF}$}{$k_i, x$}
    \State{Let $(\gamma', \delta')$ be the position of $x$ in a $(\sqrt[3]{N})^k \times \sqrt[3]{N}$ grid.}
    \State{Parse $k_i$ as $k_i= (k_i^{(a)}||k_i^{(b)}||g_1||h_1||\dots||g_{\nu}||h_{\nu})$.}
    \State{Let $\share{s_a}_i \leftarrow \Eval^{(*)}_{\DPF}(k_i^{(a)}, \gamma'), \share{s_b}_i \leftarrow \Eval^{(*)}_{\DPF}(k_i^{(b)},\gamma')$.}
    \State{Let $\share{f(x)}_i \leftarrow h_{\delta'}^{\share{s_a}_i} \cdot g_{\delta'}^{\share{s_b}_i}$.}
    \State\Return{$\share{f(x)}_i$}
    \Comment{$\share{f(x)}_i\in \mathbb{G}$.}
    \EndFunction

    \Statex
    \Function{$\Decode_{\DPF}$}{$\share{f(x)}_1, \dots, \share{f(x)}_p$}
    \Return{$\Decode_{\DDH}(\prod_{i=1}^p \share{f(x)}_i)$}
    \EndFunction
  \end{algorithmic}
\end{algorithm}

\subsection{Asymptotic key size}
Figure \ref{fig:ec-ddh-key-dpf} provides a high-level representation of our DPF keys, with an $O(\sqrt[3]{N} \cdot \binom{p-1}{m} \cdot (\lambda + \log q))$ key size.
Let us break down this asymptotic cost.

First, we have two sub-DPF keys.
The previous DPFs are defined over the domain $\{1\dots (\sqrt[3]{N})^2\}$ to ``cover all the rows''.
We have $(\sqrt[3]{N})^2$ rows and use \cite{bunn_cnf-fss_2022} as sub-DPF (with an $O(\sqrt{M}\binom{p-1}{m}\log q_0)$ key size for a domain of size $M$).
Each sub-DPF key is then of size $O(\sqrt[3]{N} \cdot \binom{p-1}{m} \cdot \log q_0)$ ($M=(\sqrt[3]{N})^2$ in our case).

Second, for each column $\delta$, we have $g_\delta, h_\delta\in \mathbb{G}$.
For convenience, assume $\mathbb{G} = \mathbb{F}^\times_q$.
As we have $\sqrt[3]{N}$ columns, the total size of these elements is $O(\sqrt[3]{N} \cdot \log q)$.

In our scheme, $\lambda$ is implicitly present because DDH is assumed to be hard in cyclic group $\mathbb{G}$.
As $\mathbb{G}$ is of order $q_0$, we have $\lambda = O(\log q_0)$.

\paragraph{Supporting other sub-DPFs}
We use a grid of size $(\sqrt[3]{N})^2\times\sqrt[3]{N}$ because the sub-DPF we use has an $O(\sqrt{M})$ key size (for a domain size of $M$).
However, using our DDH-based approach, we can replace \cite{bunn_cnf-fss_2022} with any scheme that outputs additive shares in $\mathbb{F}_{q_0}$. 
Specifically, for any sub-scheme with key size $O(\sqrt[k]{N})$, our method yields a DDH-based variant with key size $O(\sqrt[k+1]{N})$.

In particular, our approach can be applied to the dishonest-majority scheme proposed in \cite{goel_multiparty_2025}.
The resulting scheme (combining \cite{goel_multiparty_2025} with our DDH-based approach) is secure against a dishonest majority and has key size $O(\sqrt[3]{N} \cdot p^3 \lambda^3 \cdot (\log q + \lambda))$.
As shown in Section~\ref{sec:benchmark}, the scheme of \cite{goel_multiparty_2025} already has larger keys than sharing the function’s truth table.
Therefore, extending our scheme using \cite{goel_multiparty_2025} would result in an impractical solution, because \cite{goel_multiparty_2025} is impractical.

Given this, we do not further develop this extension in our work.
Instead, we focus our discussions solely on our honest-majority scheme, which, in contrast, provides substantial key size reductions for practical problem sizes.

Since we can reduce key sizes from $O(\sqrt[k]{N})$ to $O(\sqrt[k+1]{N})$, our result raises a key question: can we build a multi-party scheme with logarithmic key size via a recursive application of our scheme?
Our DDH-based scheme outputs multiplicative shares (in a DDH group), so it cannot be used recursively.

\subsection{Security}
\label{subsec:sec-ddh-dpf}
In our scheme, the DDH assumption prevents an adversary from distinguishing the correction points, and then prevents them from recovering the column $\delta_*$ containing the non-zero from the key.
To prove this intuition, we propose to consider the following theorem:

\begin{theorem}
  \label{th:security}
  Let $\lambda \in \mathbb{N}$, $N, p \in \mathbb{N}$, then $(\Gen_{\DPF}, \Eval_{\DPF}, \Decode_{\DPF})$ as described in Algorithm \ref{alg:dpf-ddh} is an FSS scheme for the family of point functions.

  Assuming that the DDH assumption holds in $\mathbb{G}$ and that the information-theoretic scheme $(\Gen^{(*)}_{\DPF}, \Eval^{(*)}_{\DPF}, \Decode_{+})$ from \cite{bunn_cnf-fss_2022} is a correct and private DPF scheme, then this scheme is correct and private against at most $m$ semi-honest parties with $m < p/2$.
\end{theorem}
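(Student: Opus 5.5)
Correctness comes first and is a direct calculation. Privacy is then a short hybrid argument: first remove all dependence on $r$ and $\gamma_*$ from the sub-keys using the privacy of the sub-DPF of \cite{bunn_cnf-fss_2022}, then remove dependence on $\delta_*$ and $\beta$ from the correction points using DDH. Throughout I follow Algorithm \ref{alg:dpf-ddh}, i.e. $h_\delta = g_\delta^{-r_\text{inv}}$ for $\delta\neq\delta_*$.

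\textbf{Correctness.} Since $\mathbb{G}$ has prime order $q_0$ and the sub-DPF outputs additive shares in $\mathbb{F}_{q_0}$, exponentiation by elements of $\mathbb{F}_{q_0}$ is well defined. Moreover $\prod_i h^{\share{s}_i} = h^{\sum_i \share{s}_i}$. Hence
\[
\prod_{i=1}^p \share{f(x)}_i = h_{\delta'}^{s_a}\, g_{\delta'}^{s_b},
\]
where $s_a = f_a(\gamma')$ and $s_b = f_b(\gamma')$ by correctness of the sub-DPF. There are three cases.
\begin{itemize}
\item If $\gamma'\neq\gamma_*$, then $s_a=s_b=0$ and the product is $g^0=\Encode_{\DDH}(0)$.
\item If $\gamma'=\gamma_*$ and $\delta'\neq\delta_*$, the product is $g_{\delta'}^{-r_\text{inv} r} g_{\delta'} = g^0$.
\item If $(\gamma',\delta')=(\gamma_*,\delta_*)$, the product is $g_{\delta_*}^{-1} g_\beta\, g_{\delta_*} = g_\beta$.
\end{itemize}
Applying $\Decode_{\DDH}$ yields $f(x)$ in every case, with probability $1$.

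\textbf{Privacy: hybrids.} The simulator outputs, for each corrupted $i\in S$, simulated sub-keys $\Sim^{(*)}(\Leak)$ for both sub-DPF instances, together with $2\nu$ independent uniform group elements $(g_\delta,h_\delta)_{\delta}$. The hybrids are as follows.
\begin{itemize}
\item $H_0$ is the real experiment.
\item $H_1$ replaces the corrupted sub-keys $(k_i^{(a)})_{i\in S}$ and $(k_i^{(b)})_{i\in S}$ by outputs of the sub-DPF simulator. This relies on privacy of \cite{bunn_cnf-fss_2022} for $m<p/2$, applied twice with independent randomness. The reduction can sample $r$, all $g_\delta$, and all $h_\delta$ itself. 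After this step, the sub-keys are independent of $r$ and $\gamma_*$.
\item $H_2$ replaces all $h_\delta$ by uniform elements. This is exactly the simulator's output.
\end{itemize}
This order matters: the sub-key for $f_a$ encodes $r$, so DDH cannot be invoked while it is still present.

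\textbf{Main obstacle: $H_1\approx H_2$.} Two difficulties arise. The $\nu$ columns share the same secret exponent $c=r_\text{inv}$, and $h_{\delta_*}$ additionally contains $g_\beta^{c}$. I handle both by a change of variables. Set $y_\delta = g_\delta^{-1}$ for $\delta\neq\delta_*$ and $y_{\delta_*} = g_{\delta_*}^{-1}g_\beta$. Then each $y_\delta$ is uniform, since $g_\delta$ is, and $h_\delta = y_\delta^{c}$ for every $\delta$, including $\delta_*$.

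Thus $H_1$ is, up to an efficiently computable bijection $y_\delta\mapsto g_\delta$, the distribution of $\nu$ pairs $(y_\delta, y_\delta^{c})$ with a common uniform nonzero $c$. The bijection only needs $g_\beta$, which the reduction knows because $\beta$ is a fixed function description. $H_2$ corresponds to $\nu$ independent uniform pairs.

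The indistinguishability of these two distributions follows from DDH via the standard random self-reducibility of DDH. Given one tuple $(g,g^{c},g^{u},Z)$, the reduction produces $y_\delta = (g^{u})^{s_\delta}g^{t_\delta}$ and $h_\delta = Z^{s_\delta}(g^{c})^{t_\delta}$ for fresh uniform $s_\delta,t_\delta$. This gives real pairs when $Z=g^{cu}$, and independent uniform pairs otherwise, except with negligible probability. The nonzero-$c$ restriction and the $1/q_0$ degenerate events only contribute a negligible statistical distance.

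Combining the three hybrids gives computational indistinguishability of $\Real$ and $\Ideal$. Together with correctness, this proves Theorem \ref{th:security}.
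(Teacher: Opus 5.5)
Your proposal is correct and follows essentially the paper's route: correctness by the same three-case computation, and privacy by simulating the sub-keys with the sub-DPF simulator, then invoking DDH on $h_\delta = y_\delta^{r_\text{inv}}$ with exactly the paper's reparametrisation $y_\delta = g_\delta^{-1}$, $y_{\delta_*} = g_{\delta_*}^{-1}g_\beta$ (the paper phrases this as the public parameters of a DDH-based seed-homomorphic PRG with seed $r_\text{inv}$). Your version is more rigorous than the paper's in three places: the explicit hybrid ordering removes $k_i^{(a)}$, which depends on $r$, before DDH is used; the random self-reduction explicitly handles the exponent shared across columns; and noting that the reduction knows $(\delta_*,\beta)$ makes the paper's separate ``adversary does not know $(\delta_*,\beta)$'' case unnecessary.
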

\begin{proof}
  Given in Appendix \ref{app:proof-th:security}.
\end{proof}

\subsection{Comparison with existing works}

\begin{table}
  \newcolumntype{P}[1]{>{\centering\arraybackslash}p{#1}}
  \renewcommand*{\arraystretch}{1.8}
  \begin{center}
    \begin{tabular}{|P{2.7cm}|c|c|P{3cm}|P{3.8cm}|}
      \hline
      Scheme                            & Year & Majority  & Assumptions                     & Key size                                                                           \\\hline\hline
      \cite{boyle_function_2015}        & 2015 & Dishonest & PRG                             & $O(\sqrt{N}\cdot q^{\frac{p-1}{2}}\cdot(\log q + \lambda))$                        \\\hline
      \cite{corrigan-gibbs_riposte_2015,kumar_compact_2024} & 2015 & Dishonest & DDH                             & $O(\sqrt{N}\cdot(\lambda + \log q))$                                               \\\hline
      PRG-based \cite{bunn_cnf-fss_2022} & 2022 & Honest    & PRG                             & $O(\sqrt[4]{N} \cdot \sqrt{q^{p^m}} \cdot \binom{p-1}{m} \cdot(\lambda + \log q))$ \\\hline
      Info.-Th. \cite{bunn_cnf-fss_2022} & 2022 & Honest    & None                            & $O(\sqrt{N}\cdot \binom{p-1}{m} \cdot \log q)$                                     \\\hline
      \cite{goel_multiparty_2025} & 2025 & Dishonest    & PRG                             & $O(\sqrt{N}\cdot {p^3\lambda^3}\cdot(\log q + \lambda))$                                             \\\hline
      \cite{krips_multi-party_2025} & 2025 & Dishonest    & DDH \textbf{+ 2 new hardness assumpt.}                             & $O(\log{N}\cdot (\log q + \lambda))$                                             \\\hline\hline
      \textbf{\color{blue}Our scheme} & 2025    & Honest    & DDH                             & $O(\sqrt[3]{N} \cdot \binom{p-1}{m} \cdot (\lambda + \log q))$                     \\\hline
    \end{tabular}
  \end{center}
  \caption{Comparison of the multi-party DPF schemes}
  \label{tab:dpf}
\end{table}

Table \ref{tab:dpf} compares existing schemes to ours.
We achieve two goals: (1) avoiding exponential factors present in \cite{boyle_function_2015,bunn_cnf-fss_2022}, and (2) improving the dependency on the function domain size $N$ (compared to other practical schemes \cite{bunn_cnf-fss_2022,corrigan-gibbs_riposte_2015,goel_multiparty_2025}).
We offer an intermediary solution between an $O(\sqrt[4]{N})$ key size with exponential factors \cite{bunn_cnf-fss_2022} and $O(\sqrt{N})$ key sizes without exponential factors \cite{bunn_cnf-fss_2022,corrigan-gibbs_riposte_2015,goel_multiparty_2025}.

The scheme of \cite{krips_multi-party_2025} achieves the best known asymptotic key size, namely $O(\log N)$.
However, this improvement comes at the cost of relying on two new hardness assumptions whose concrete security is not yet well understood.
The work of \cite{krips_multi-party_2025} represents an exciting theoretical advancement, but the lack of concrete security hinders its implementation \emph{for now}.
Further research is needed to determine parameter sets that achieve standard security levels (e.g., 128-bit or 256-bit).
In contrast, our scheme has $O(\sqrt[3]{N})$ key sizes, but is based solely on the DDH assumption; a \emph{standard and extensively studied cryptographic assumption}.

\subsection{Pseudo-random secret sharing}
Pseudo-random secret sharing (PRSS) \cite{cramer_share_2005} is a protocol reducing the communication cost of secret sharing protocols thanks to pseudo-random seed expansion.
As most multi-party schemes \cite{boyle_function_2015,bunn_cnf-fss_2022,corrigan-gibbs_riposte_2015} (including ours) output DPF keys containing secret shares, we can reduce their size thanks to PRSS.

In particular, we rely on the simplest form of PRSS applied to additive secret shares.
Additive secret sharing on a secret $s$ normally outputs $p$ shares $\{\share{s}_1 \dots \share{s}_p\}$, such that $\sum_i \share{s}_i = s$.
With PRSS, the secret holder generates $p-1$ random seeds $r_1, \dots, r_{p-1}$ and a share $\share{s}_p$, such that $\share{s}_p +\sum_{i=1}^{p-1} G(r_i) = s $ (with $G$ a PRG).
One shareholder receives $\share{s}_p$ and the other $p-1$ parties receive a random seed $r_i$.

The random seeds can be reused to share multiple values.
To share a vector of $n$ values, the secret holder sends $p-1$ seeds and a single vector of $n$ shares (vs. $p$ vector of $n$ shares in classic secret sharing).
If the MPC protocol allows the secret holder to be a shareholder, they can keep the vector of shares and send only random seeds to the other shareholders.

This technique amortizes the communication costs (i.e., key size) when the protocols require sharing large vectors.
Since all practical schemes \cite{bunn_cnf-fss_2022,corrigan-gibbs_riposte_2015} are compatible with PRSS, Section \ref{sec:benchmark} compares the key sizes using PRSS.
\section{Key size benchmark}
\label{sec:benchmark}
We compare the exact key size of our DPF scheme to those of existing schemes.
Our experiments report the total key size instead of the individual key size.
This metric takes into account the key size amortization provided by PRSS.
For this benchmark, we implement our scheme using elliptic curves $E(\mathbb{F}_q)$, as they provide prime-order cyclic groups in which the DDH is known to be hard.
More specifically, we use the curve P-256.

\paragraph{Exponential factors in existing works}
Figure \ref{fig:exponential_factor} illustrates the exponential factor of the PRG-based multi-party DPF \cite{boyle_function_2015,bunn_cnf-fss_2022}.
Figure \ref{subfig:prime-moduli} represents their key size in function of the output bit length (i.e., $\log_2 q$ for an output in $\mathbb{F}_q$).
While this problem was barely discussed in existing works \cite{boyle_function_2015,bunn_cnf-fss_2022}, Figure \ref{fig:exponential_factor} shows that it makes these techniques purely unusable.
For only 12-bit moduli, the DPF of \cite{boyle_function_2015} has a key size of $10^{18}$ bits and the PRG-based DPF of \cite{bunn_cnf-fss_2022} has a key size of $10^{38}$.

\begin{figure}[t]
  \centering
  \begin{subfigure}{0.4\linewidth}
    \includegraphics[width=\linewidth]{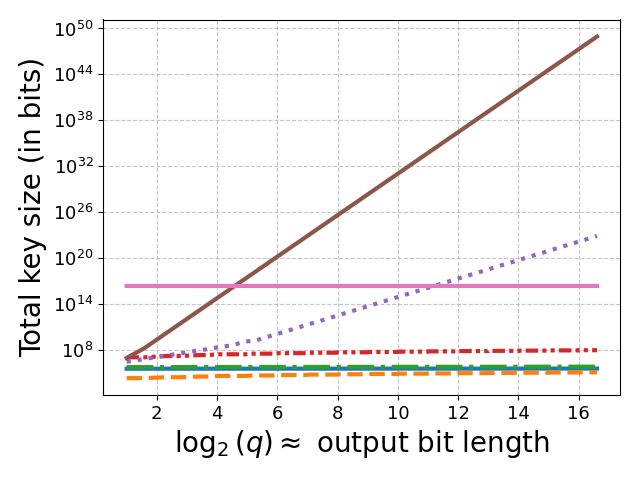}
    \caption{Prime moduli}
    \label{subfig:prime-moduli}
  \end{subfigure}
  \begin{subfigure}{0.4\linewidth}
    \includegraphics[width=\linewidth]{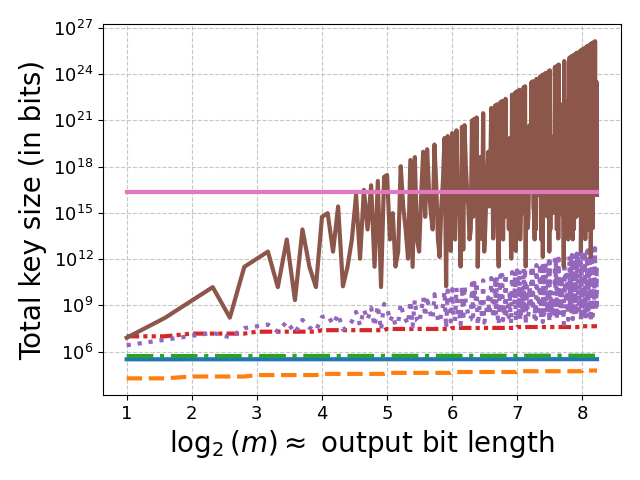}
    \caption{Arbitrary moduli}
    \label{subfig:arbitrary-moduli}
  \end{subfigure}
  \begin{subfigure}{0.4\linewidth}
    \includegraphics[width=\linewidth]{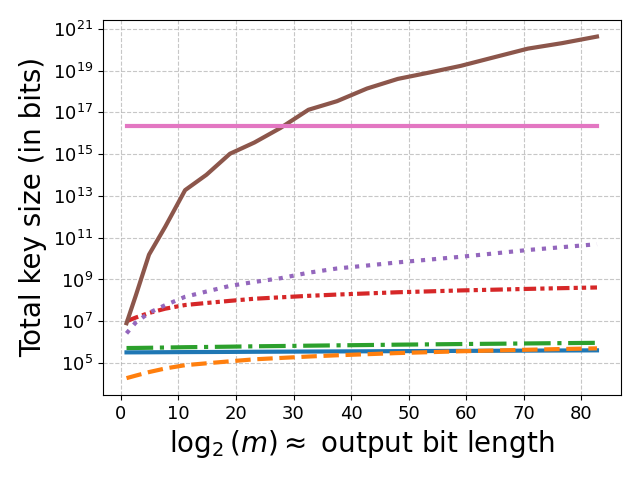}
    \caption{Primorial moduli}
    \label{subfig:primorial-moduli}
  \end{subfigure}
  \begin{subfigure}{0.4\linewidth}
    \raisebox{0.6\height}{\includegraphics[width=\linewidth]{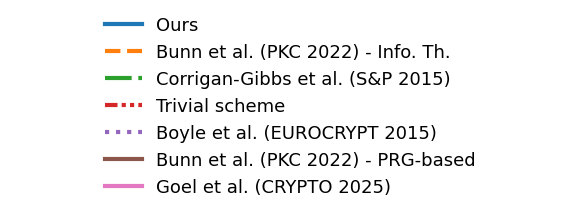}}
    \caption{Legend}
  \end{subfigure}
  \caption{DPF key sizes for varying moduli. Parameters: $p=5$ parties, $N=10^6$.}
  \label{fig:exponential_factor}
\end{figure}

Figure \ref{fig:exponential_factor} includes a curve ``Trivial scheme'', corresponding to the most trivial DPF implementation: sharing the truth table of the function.
This trivial baseline is much better than these PRG-based solutions for prime fields other than $\mathbb{F}_2$.
While other works \cite{boyle_function_2022,kumar_compact_2024} warned about this potential scalability issue, our work provides the first evidence of their impracticality in $\mathbb{F}_q$.

In comparison, we can barely see on Figure \ref{subfig:prime-moduli} the key size difference between the rest of the DPF schemes \cite[Ours]{bunn_cnf-fss_2022,corrigan-gibbs_riposte_2015}.
Note that Figure \ref{fig:exponential_factor} includes curves for the DDH-based schemes \cite[Ours]{corrigan-gibbs_riposte_2015}.
Such schemes are implemented with a fixed elliptic curve P-256, so the output is of fixed size (256 bits).
Thus, their key sizes are constant on this figure.

Recently, Boyle \cite{boyle_function_2022} proposed an optimization based on the Chinese Remainder Theorem (CRT) to optimize the key sizes of PRG-based schemes \cite{boyle_function_2015,bunn_cnf-fss_2022} for composite moduli.
Thanks to this CRT trick, we can replace the exponential factor $q^p$ with a sum of smaller factors $\sum_i q_i^p$ (with $q_i$ the prime factors of the composite modulus).

Figure \ref{subfig:arbitrary-moduli} represents the key sizes for varying composite moduli, but the figure is hard to read, because this trick is sensitive to the prime decomposition.
Two numbers $m$ and $m+1$ can have completely different prime factors.
To provide more readable results, Figure \ref{subfig:primorial-moduli} represents the key sizes for varying priomorial moduli.
A primorial is a composite number whose prime factors are the $n$ smallest primes.
Such numbers are the best case scenario for the CRT trick because they provide the smallest factors possible.
On Figure \ref{subfig:primorial-moduli}, we observe that the trick significantly improves the key size of the PRG-based schemes, but they remain way above the trivial scheme for any modulus above 210.

Finally, Figure~\ref{fig:exponential_factor} provides a key insight into \cite{goel_multiparty_2025}.
Note that their key size may seem constant in $q$ on Figure \ref{fig:exponential_factor}, but it is not; their key size is simply dominated by factors independent of $q$.
While their work replaces the exponential factor present in \cite{boyle_function_2015} with a polynomial factor of $p^3\lambda^3$, we demonstrate that, for realistic problem sizes, this polynomial term actually exceeds $q^p$.
As a result, the asymptotic improvement offered by \cite{goel_multiparty_2025} does not translate into practical efficiency: the key size remains significantly larger than that of the trivial scheme.
In other words, although \cite{goel_multiparty_2025} presents a notable theoretical advance by eliminating the exponential term of \cite{boyle_function_2015}, the resulting scheme remains impractical; performing worse than the trivial DPF scheme.

Since PRG-based schemes \cite{boyle_function_2015,bunn_cnf-fss_2022,goel_multiparty_2025} are impractical for moduli larger than 210 (even with the CRT optimization of \cite{boyle_function_2022}), we exclude them from our other experiments.
This allows us to observe precisely the key size improvements achieved by our scheme when compared to practical DPF schemes \cite{bunn_cnf-fss_2022,corrigan-gibbs_riposte_2015}.

\paragraph{Varying function domain}
Figure \ref{fig:function_domain_size} compares the key sizes in function of the domain sizes; varying from $10^2$ to $10^{10}$.
Zyskind et al. \cite{zyskind_high-throughput_2024} used the same problem sizes to benchmark three-party DPF schemes.

\begin{figure}[t]
  \centering
  \begin{subfigure}{0.45\linewidth}
    \includegraphics[width=\linewidth]{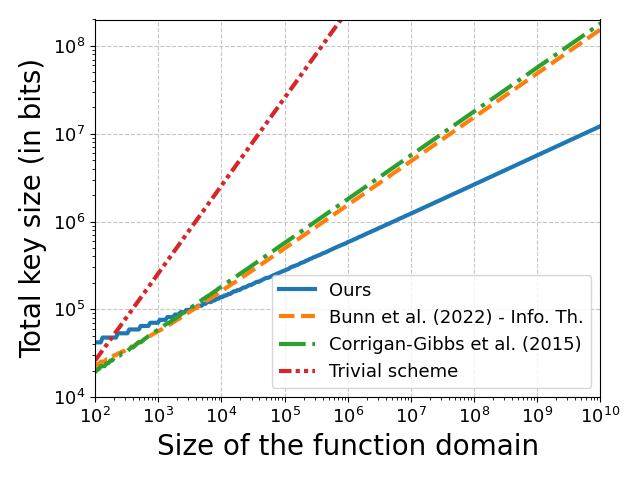}
    \caption{Varying domain size ($p=5$)}
    \label{fig:function_domain_size}
  \end{subfigure}
  \begin{subfigure}{0.45\linewidth}
    \includegraphics[width=\linewidth]{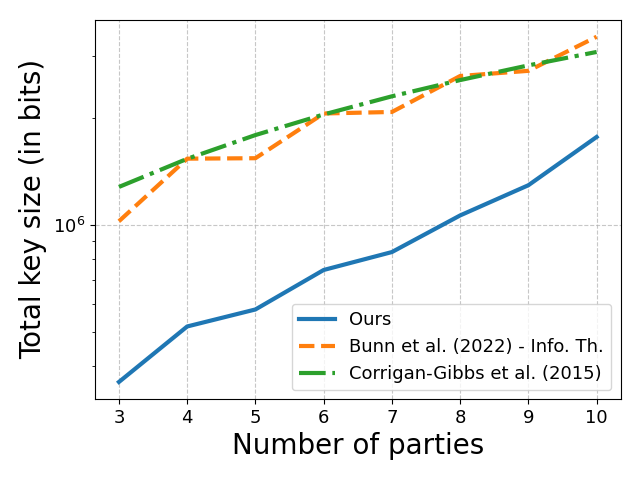}
    \caption{ Varying nb. of parties ($N=10^6$)}
    \label{fig:nb_parties}
  \end{subfigure}
  \caption{Key sizes for varying domain sizes and number of parties.}
\end{figure}

Our key size is the smallest for any domain size above $10^3$.
When $N=10^6$, our key size is 3 times smaller.
When $N=10^9$, our key size is 10 times smaller.

Our key size is higher than those of \cite{bunn_cnf-fss_2022,corrigan-gibbs_riposte_2015} only when $N$ is below $10^3$.
To put this into perspective, Figure \ref{fig:function_domain_size} also shows that the trivial DPF scheme becomes the most efficient for smaller domain sizes (below $10^2$).
Thus, the advantage provided \cite{bunn_cnf-fss_2022,corrigan-gibbs_riposte_2015} only holds for particularly small function domains and the trivial DPF could even be preferable in such cases.

\paragraph{Varying number of parties}
Figure \ref{fig:nb_parties} compares the key sizes in function of the number of parties; varying from 3 to 10.
We observe that Corrigan-Gibbs et al. \cite{corrigan-gibbs_riposte_2015} has a better scaling than our scheme, but our key size remains smaller.

The growth of our key size is not as smooth as the growth of the scheme by \cite{corrigan-gibbs_riposte_2015}.
For example, the growth between 3 and 4 parties is steeper than between 4 and 5.
This phenomenon is caused by the honest-majority assumption: (contrary to \cite{corrigan-gibbs_riposte_2015}) our total key size is correlated to the number of honest parties.
Under the honest-majority assumption, 4-party and 5-party setups have exactly the same number of honest parties (i.e., 3).
The same phenomenon is present for \cite{bunn_cnf-fss_2022}.




\section{DDH-based Distributed Comparison Function}
\label{sec:dcf}
This section adapts our DPF scheme to comparison functions.
However, the literature has described fewer DCF schemes than DPF.
Indeed, only \cite{boyle_function_2015,kumar_compact_2024} have described multiparty DCFs, which provides weak baselines compared to DPF.
Thus, Appendix \ref{subsec:adapt-bunn-et-al} adapts the DPF schemes of \cite{bunn_cnf-fss_2022} to DCF in order to compare our DCF to challenging baselines. 

\subsection{Scheme}
We build our DDH-based DCF upon the information-theoretic DPF of \cite{bunn_cnf-fss_2022} and the DCF adapted from \cite{bunn_cnf-fss_2022} (see Appendix \ref{subsec:adapt-bunn-et-al}).
The main change compared to our DPF scheme is that we require both a sub-DCF and a sub-DPF.
Algorithm \ref{alg:dcf-ddh} details our scheme (with the changes compared to our DPF in {\color{blue}blue}).



To understand their respective roles, let us represent the function domain as a grid.
On the one hand, the DCF covers all rows containing only non-zeros values (i.e., $\beta$).
On the other hand, the DPF enables to select the row $\gamma_*$ containing a segment of $\beta$ and a segment of $0$.
To complete our extension, we update the definition of the ``correction points'', so that the $\gamma_*$ row contains $\beta$ until the column $\delta_*$ and 0 after.

\begin{algorithm}[t]
  \caption{DDH-based DCF scheme (changes compared to our DPF in {\color{blue}blue})}\label{alg:dcf-ddh}
  \begin{algorithmic}[1]
    \State{Let $\mathbb{G}$ be a cyclic group of prime order $q_0$ and with $g$ a generator.}
    \State{Let $(\Gen^{(*)}_{\DPF}, \Eval^{(*)}_{\DPF}, \Decode_{+})$ be the information-theoretic DPF of Bunn et al. \cite{bunn_cnf-fss_2022} {\color{blue} and $(\Gen^{(*)}_{\DCF}, \Eval^{(*)}_{\DCF}, \Decode_+)$ be the DCF adapted from it}.}
    \State{Let $\Encode_{\DDH}: \mathcal{Y} \rightarrow \mathbb{G}$ and $\Decode_{\DDH}: \mathbb{G} \rightarrow \mathcal{Y} $ be two functions such that $\Encode_{\DDH}(0)=g^0$ and $\Decode_{\DDH}(\Encode_{\DDH}(y)) = y$ for any $y\in\mathcal{Y}$.}
    \Statex{}
    \Function{$\Gen_{\DPF}$}{$\alpha, \beta, p, m$}
    \State{Let $g_\beta\leftarrow \Encode_{\DDH}(\beta)$.}
    \State{Let $(\gamma_*, \delta_*)$ be the position of $\alpha$ in a $(\nu)^k \times \nu$ grid, with $\nu= \lceil\sqrt[3]{N}\rceil$.}
    \State{Sample {\color{blue}two} (non-zero) $r, s \xleftarrow{R} \mathbb{F}_{q_0}$ and set $r_\text{inv}\leftarrow r^{-1}$ {\color{blue}and $s_\text{inv}\leftarrow {s}^{-1}$}.}
    \State{Generate two sets of DPF keys {\color{blue}and one of DCF keys}:}
    \Statex{\hspace{3em}$\{k_1^{(a)},\dots, k_p^{(a)}\}\leftarrow \Gen^{(*)}_{\DPF}(\gamma_*, r, p, m)$}
    \Statex{\hspace{3em}$\{k_1^{(b)},\dots, k_p^{(b)}\}\leftarrow \Gen^{(*)}_{\DPF}(\gamma_*, 1, p, m)$}
    \Statex{\hspace{3em}\color{blue}$\{k_1^{(c)},\dots, k_p^{(c)}\}\leftarrow \Gen^{(*)}_{\DCF}(\gamma_*-1, r_c, p, m)$.}

    \State{\textbf{for} {\color{blue}$\delta > \delta_*$}} \textbf{do} {Sample a point $g_\delta$ from $\mathbb{G}$ and set $h_\delta \leftarrow g_\delta^{-r_\text{inv}}$.}
    \State{\textbf{for} {\color{blue}$\delta\leq\delta_*$}} \textbf{do} {Sample a $g_{\delta}$ from $\mathbb{G}$ and set $h_{\delta} \leftarrow g_{\delta}^{-r_\text{inv}} \cdot g_\beta^{r_\text{inv}}$.}

    \State{\color{blue}Let $u \leftarrow g_\beta^{s_\text{inv}}$.}
    \State{Set $k_i = (k_i^{(a)}||k_i^{(b)}||g_1||h_1||\dots||g_{\nu}||h_{\nu}{\color{blue}||u}), \forall i \in\{1\dots p\}$}
    \State\Return{$(k_1, \dots, k_p)$.}
    \EndFunction
    \Statex{}

    \Function{$\Eval_{\DPF}$}{$k_i, x$}
    \State{Let $(\gamma', \delta')$ be the position of $x$ in a $(\sqrt[3]{N})^2 \times \sqrt[3]{N}$ grid.}
    \State{Parse $k_i=(k_i^{(a)}||k_i^{(b)}||g_1||h_1||\dots||g_{\nu}||h_{\nu}{\color{blue}||u})$.}
    \State{Let $\share{s_a}_i \leftarrow \Eval^{(*)}_{\DPF}(k_i^{(a)}, \gamma'), \share{s_b}_i \leftarrow \Eval^{(*)}_{\DPF}(k_i^{(b)},\gamma')$}
    \State{\color{blue} Let $\share{s_c}_i \leftarrow \Eval^{(*)}_{\DCF}(k_i^{(c)},\gamma')$}
    \State{Let $\share{f(x)}_i \leftarrow h_{\delta'}^{\share{s_a}_i} \cdot g_{\delta'}^{\share{s_b}_i}{\color{blue}\cdot u^{\share{s_c}_i}}$.}
    \State\Return{$\share{f(x)}_i$}
    \Comment{$\share{f(x)}_i\in \mathbb{G}$.}
    \EndFunction

    \Statex
    \Function{$\Decode_{\DPF}$}{$\share{f(x)}_1, \dots, \share{f(x)}_p$}
    \Return{$\Decode_{\DDH}(\prod_{i=1}^p \share{f(x)}_i)$}
    \EndFunction
  \end{algorithmic}
\end{algorithm}

This scheme is a simple adaptation of our DPF; adding a (secure) sub-DCF and updating the correction points.
Thus, its security proof relies on the same arguments as our DPF scheme (see Appendix \ref{app:proof-th:security}).

\subsection{Comparison}
Table \ref{tab:dcf} compares our DCF schemes to the existing schemes.
The DCF schemes adapted from \cite{bunn_cnf-fss_2022} have respectively $O(\sqrt[4]{N} \cdot \sqrt{q^{p^m}} \cdot \binom{p-1}{m} \cdot(\lambda + \log q))$ key size for the PRG-based and $O(\sqrt{N}\cdot \binom{p-1}{m} \cdot \log q)$ key size for the information-theoretic.
Finally, our DDH-based scheme has a key size of $O(\sqrt[3]{N}\cdot \binom{p}{m+1}\cdot \log q)$.

\begin{table}[t]
  \newcolumntype{P}[1]{>{\centering\arraybackslash}p{#1}}
  \renewcommand*{\arraystretch}{1.8}
  \begin{center}
    \begin{tabular}{|P{2.9cm}|c|P{2.6cm}|P{4.2cm}|}
      \hline
      Scheme                                                                 & Majority  & Assumption                     & Key size                                                                           \\\hline\hline
      \cite{boyle_function_2015}                                             & Dishonest & PRG                            & $O(\sqrt{N}\cdot q^{\frac{p-1}{2}} \cdot (\lambda+\log q))$                        \\\hline
      \cite{kumar_compact_2024}                                              & Dishonest & DDH \textbf{+ Trusted decoder} & $O(\sqrt{N} (2^{-\frac{p-1}{2}}\cdot\lambda+ 2^{\frac{p-1}{2}}\cdot\log q))$       \\\hline\hline
      \textbf{\color{blue}Our adaptation} PRG-based \cite{bunn_cnf-fss_2022} & Honest    & PRG                            & $O(\sqrt[4]{N} \cdot \sqrt{q^{p^m}} \cdot \binom{p-1}{m} \cdot(\lambda + \log q))$ \\\hline
      \textbf{\color{blue}Our adaptation} Info.-Th. \cite{bunn_cnf-fss_2022} & Honest    & None                           & $O(\sqrt{N}\cdot \binom{p-1}{m} \cdot \log q)$                                     \\\hline\hline
      \textbf{\color{blue}Our scheme}                                        & Honest    & DDH                            & $O(\sqrt[3]{N} \cdot \binom{p-1}{m} \cdot (\lambda + \log q))$                     \\\hline
    \end{tabular}
    \vspace{1em}

    \caption{Comparison of the multi-party DCF schemes}
    \label{tab:dcf}
  \end{center}
\end{table}

We obtain the same conclusions as in multi-party DPF: our scheme provides an intermediary solution between efficient schemes \cite{kumar_compact_2024,bunn_cnf-fss_2022} with $O(\sqrt{N})$ key size and an impractical PRG-based scheme \cite{bunn_cnf-fss_2022} with $O(\sqrt[4]{N}q^p)$ key size. 
\section{DDH-based secret encodings and their applications}
\label{sec:secret-encoding}
Our schemes (as well as other DDH-based schemes) need to encode/decode the secret value $\beta$ in the cyclic group $\mathbb{G}$.
This encoding and decoding steps are essential as FSS applications usually involve real-valued or integer secrets that must be mapped to $\mathbb{G}$.
Existing papers \cite{corrigan-gibbs_riposte_2015,kumar_compact_2024} have been vague on the exact way to encode and handle secret in DDH-based schemes.

To have a discussion as concrete as possible, this section uses elliptic curves as cyclic group; $\mathbb{G} = E(\mathbb{F}_q)$.
Elliptic curves provide (DDH-hard) cyclic group with prime orders.
They were notably used by \cite{corrigan-gibbs_riposte_2015} as they provide smaller key sizes than other DDH-hard cyclic groups. 

Contrary to other DPF schemes \cite{boyle_function_2015,bunn_cnf-fss_2022}, DDH-based schemes use multiplication to decode the secret shares.
However, using elliptic curves, the multiplication in the cyclic group corresponds to an addition of elliptic curve points.
Even though elliptic curves provide a form of additive decoding, this decoding does not provide the same properties as linear share decoding in $\mathbb{F}_q$.

Thus, this section presents two possible DDH-based secret encoding.
For each encoding, we present its properties and use cases.

\subsection{Share compressibility}
Boyle et al. \cite{boyle_function_2015} identified some desirable share properties, in particular: compressibility.
According to them, a scheme has compressible shares if we can combine the shares ``in a meaningful way'' without communication between the shareholders.
They focused on the aggregation of multiple shares: $\share{f(x_1)} \oplus \share{f(x_2)} = \share{f(x_1) + f(x_2)}$.
Many FSS applications (e.g., PIR \cite{gilboa_distributed_2014} or private histograms \cite{boneh_lightweight_2021}) require a form of compressibility.

Works like \cite{boyle_function_2015,bunn_cnf-fss_2022} satisfy this property thanks to their additive secret-sharing in $\mathbb{F}_q$.
However, ``DDH-based'' encoding can require a non-linear share decoding, that does not necessarily guarantee compressibility.
The next subsections present two DDH-based secret encodings, discuss their relation to compressibility, and list some applications.

\subsection{Point-based secret encoding}
An elliptic curve point $P$ is defined using the coordinates $(x, y)$.
Using the curve formula, we can recover $y$ for any given $x$.

Point encoding consists in finding $P_\beta = (\beta, y_\beta)$, with $y_\beta \ge 0$.
The shares are randomly sampled points $P_1, \dots, P_p \in E(\mathbb{F}_q)$ such that $\sum_i P_i = P_\beta$.
The share decoding requires summing the $p$ points and then extracting the first coordinate of the sum result.

\paragraph{Discussion}
However, this encoding \textit{prevents a general share compressibility} because we cannot sum two encoded secrets.
This constraint comes from the properties of the elliptic curve addition.
Consider two secrets $\beta_1, \beta_2 \in \mathbb{F}_q$ and their respective point-based encodings $P_{\beta_1}, P_{\beta_2}$, let $P_* = P_{\beta_1} + P_{\beta_2} = (x_*, y_*)$.
In the general case, we have $x_* \neq \beta_1 + \beta_2$.

Nevertheless, there is an exception to this impossible addition: if one of the shared values is 0.
By convention, we encode zero values using the point at infinity $\mathcal{O}$.
Hence, we have $P_{\beta} + P_{0} = P_{\beta} + \mathcal{O} = P_{\beta}$.
Thus, point-based encoding provides a limited compressibility: we can sum multiple function shares only if, \emph{at most} one of them is not null.

\paragraph{DPF Use case}
Anonymous broadcasters enable several parties to broadcast a message without revealing the message origin.
These systems often take the form of public bulletin board in which the writers are anonymous.
In recent years, DPFs have become valuable primitives to build efficient anonymous broadcasters like in Riposte \cite{corrigan-gibbs_riposte_2015} and Spectrum \cite{newman_spectrum_2022}.

Their DPF-based protocols have the following structure: (1) the servers initialize a large shared bulletin board, (2) each sender $i$ picks at random position $\alpha_i$ in the bulletin board, (3) each sender $i$ uses their message $m_i$ to share a secret point function $f_i$ such that $f_i(x) = m_i$ if $x=\alpha_i$, $0$ otherwise, (4) for each shared function $\share{f_i}$, the servers obtain a shared vector $\share{v_i}$ such that $\share{v_i[j]} = \share{f_i(j)}$, (5) they sum all the vectors $v_i$ with the bulletin board, and reveal the final bulletin board.
In other words, the DPF is used as an oblivious write operation on a secret-shared bulletin board.

If at most one sender picks an index, the limited compressibility is sufficient.
Spectrum and Riposte proposed solutions to avoid two senders from choosing the same index.

\paragraph{DCF Use case}
Many recent works \cite{jawalkar_orca_2024,kumar_compact_2024,ryffel_ariann_2022} considered FSS to optimize non-linear operations in privacy-preserving machine learning.
For instance, Kumar et al. \cite{kumar_compact_2024} used DCF to perform efficient and secure ReLU operations.
This last protocol does not require aggregating multiple shared secrets, so compressibility is not necessary; which makes point-based encoding perfectly acceptable.

\subsection{Exponent-based secret encoding}
This second encoding represents the secret value as an exponent: $P_\beta = \beta\cdot P$, with $P$, a generator of the curve.
The shares are randomly sampled points $P_1, \dots, P_p \in E(\mathbb{F}_q)$ such that $\sum_i P_i = P_\beta$; it requires sampling $p$ exponents $(s_1, \dots, s_p)$ such that $\sum s_i = \beta$.
The share decoding consists in summing $p$ elliptic curve points and then computing the discrete logarithm.
While this operation is hard in a general case, it is possible if we assume a bounded $\beta$ (e.g., $\beta < 10^6$).
This assumption is realistic in some applications such as private statistics.

\paragraph{Discussion}
This encoding provides compressibility because we can sum two encoded secrets: $P_{\beta_1} + P_{\beta_2} = \beta_1 \cdot P + \beta_2 \cdot P = (\beta_1 + \beta_2) \cdot P = P_{\beta_1 + \beta_2}$.
However, it has one drawback: the secret must be small, otherwise the decoding is impractical.

\paragraph{DPF Use case}
Some works \cite{boneh_lightweight_2021,mouris_plasma_2024} have used two-party DPFs to build private histograms.
In this application, a group of data owners wants to compute the histogram of a certain property (e.g., salary), but each individual wants to keep their value private.

DPF schemes provides a straightforward solution: (1) each individual uses their private value $a_i \in \{1 \dots N\}$ to share a secret point function $f_i$ with  $f_i(a_i) = 1$, (2) from each shared function $\share{f_i}$, the servers can deduce a shared vector $\share{v_i}$ such that $\share{v_i[j]} = \share{f_i(j)}$, (3) The servers aggregate the shared vectors $\share{v_i}$ and decode the resulting histogram.

This private histogram protocol requires share compressibility because each bin of the histogram usually contains the sum of multiple inputs.
Exponent-based encoding is then our only option.
The main drawback of exponent-based encoding is that the secret value must be small enough to compute a discrete logarithm, but this assumption is realistic in histogram-making because histogram values number are usually bounded.

\paragraph{DCF Use case}
Barczewski et al. \cite{barczewski_differentially_2025} recently proposed using DCF to estimate empirical cumulative distribution functions (EDCF) on sensitive data.
These statistics are essential notably to evaluate the performances of ML models or to perform some statistical tests \cite{barczewski_differentially_2025}.
As ECDF are related to histograms, they require the same share compressibility properties.
Thus, exponent-based encoding is necessary for such applications, assuming the secret is sufficiently small.
Like in private histograms, such a bound usually exists in practice.

\section{Conclusion}
Our work presented a DDH-based approach to build optimized multi-party FSS schemes from existing schemes.
In particular, it results into practical schemes with $O(\sqrt[3]{N})$ key sizes instead of $O(\sqrt{N})$.
We applied our technique both on DPF (i.e., FSS for point functions) and DCF (i.e., FSS for comparison functions).
Finally, our benchmark highlighted key size reductions up to a factor 10 compared to state-of-the-art schemes on realistic problem sizes.

\begin{credits}
  \subsubsection{\ackname}
  This work was supported by the Netherlands Organization for Scientific Research (De Nederlandse Organisatie voor Wetenschappelijk Onderzoek) under NWO:SHARE project [CS.011].
\end{credits}

\appendix
\section{Appendix: Proof of Theorem \ref{th:security}}
\label{app:proof-th:security}
\begin{proof}
  \textbf{Correctness:}
  We represent any input $x$ as $(\gamma, \delta)$ and $\alpha$ as $(\gamma_*, \delta_*)$.
  We need to study the FSS output in three cases: (1) $\gamma \neq \gamma_*$, (2) $\gamma = \gamma_*$ and $\delta \neq \delta_*$, and (3) $\gamma = \gamma_*$ and $\delta = \delta_*$.

  If  $\gamma \neq \gamma_*$, each party $i$ holds $\share{s_a}_i, \share{s_b}_i$ (obtained using their sub-DPF keys), such that $\sum \share{s_a}_i = \sum \share{s_b}_i = 0$ (because the sub-DPF is correct by assumption and outputs additive shares).
  For any $\delta$:

  \begin{align*}
    \prod_i \share{f(x)}_i & = \prod_i (h_\delta^{\share{s_a}_i} \cdot g_\delta^{\share{s_b}_i}) = h_\delta^{\sum_i \share{s_a}_i} \cdot g_\delta^{\sum_i \share{s_b}_i} = h_\delta^0 \cdot g_\delta^0 = g^0 \\
                           \Rightarrow \text{Decode}_{\DPF}&(\Eval_{\DPF}(k_1, x), \dots, \Eval_{\DPF}(k_p, x)) = \Decode_{\DDH}(g^0) =0
  \end{align*}

  If $\gamma = \gamma_*$ and $\delta \neq \delta_*$, each party $i$ holds $\share{s_a}_i, \share{s_b}_i$, such that $\sum_i \share{s_a}_i = r$ and $\sum_i \share{s_b}_i = 1$.
  The parties obtain:
  \begin{align*}
    \prod_i \share{f(x)}_i & = \prod_i (h_\delta^{\share{s_a}_i} \cdot g_\delta^{\share{s_b}_i}) = h_\delta^{\sum_i \share{s_a}_i} \cdot g_\delta^{\sum_i \share{s_b}_i} \\
                           & = g_\delta^{-r_\text{inv} \cdot r} \cdot g_\delta  =  g_\delta^{-1} \cdot g_\delta = g^0                                                \\
                           \Rightarrow \text{Decode}_{\DPF}&(\Eval_{\DPF}(k_1, x), \dots, \Eval_{\DPF}(k_p, x)) = \Decode_{\DDH}(g^0) =0
  \end{align*}

  If $\gamma = \gamma_*$ and $\delta = \delta_*$, we also have $\sum_i \share{s_a}_i = r$ and $\sum_i \share{s_b}_i = 1$, but $h_\delta$ is now equal to $(g_\delta^{-r_\text{inv}} \cdot g_\beta^{r_\text{inv}})$. It implies:
  \begin{align*}
    \prod_i \share{f(x)}_i & = \prod_i (h_\delta^{\share{s_a}_i} \cdot g_\delta^{\share{s_b}_i}) = h_\delta^{\sum_i \share{s_a}_i} \cdot g_\delta^{\sum_i \share{s_b}_i} \\
                           & = g_\delta^{-r_\text{inv} \cdot r} \cdot g_\beta^{r_\text{inv} \cdot r} \cdot g_\delta = g_\beta                                        \\
                           \Rightarrow \text{Decode}_{\DPF}&(\Eval_{\DPF}(k_1, x), \dots, \Eval_{\DPF}(k_p, x)) = \Decode_{\DDH}(g_\beta) = \beta
  \end{align*}

  \textbf{Privacy:}
  This proof will rely on a construction also used in \cite{corrigan-gibbs_riposte_2015} to prove the security of their DDH-based DPF: Seed-Homomorphic PseudoRandom Generators (SH-PRG).
  A PRG $G$ is seed-homomorphic if it satisfies the following property: for any seeds $s_1, s_2$, $G(s_1 + s_2) = G(s_1) \oplus G(s_2)$.
  In particular, Corrigan-Gibbs et al. \cite{corrigan-gibbs_riposte_2015} rely on DDH-based SH-PRG $G_{\DDH}$: given $L$ randomly sampled public parameters $(g_1, \dots, g_L)\in \mathbb{G}^L$, $G_{\DDH}(s) = (g_1^s, \dots, g_L^s)$ for any seed $s$.
  SH-PRGs naturally inherit all properties of a PRG, notably the fact that the output is computationally indistinguishable from true randomness if the seed is unknown to the adversary.

  Recall the two assumptions of Theorem \ref{th:security}: (1) the DDH is hard, and (2) the sub-DPF scheme is private and correct.
  To prove the privacy (see Definition \ref{def:privacy}), we must show that (for every set of corrupted parties $S \subseteq \{1\dots p\}$ of size $m$) there exists an efficient simulator that, for any input functions, outputs samples from a distribution that is computationally indistinguishable from the distribution of the DPF keys.

  Remind that each key $k_i$ contains the following elements: two sub-DPF keys $(k_i^{(a)},k_i^{(b)})$ and pairs of ``correction points'' $(h_1, g_1), \dots, (h_\nu, g_\nu)$

  By assumption, the sub-DPF scheme $(\Gen^{(*)}_{\DPF}, \Eval^{(*)}_{\DPF}, \Decode_+)$ is secure, so there exists an efficient simulator to simulate the sub-DPF keys $(k_i^{(a)},k_i^{(b)})$ (for all $i\in\{1\dots p\}$).

  To simulate the correction points, for each $\delta \in \{1 \dots \nu\}$, the simulator samples a random generator $\widetilde{g_\delta}\in\mathbb{G}$ and another random element $\widetilde{h_\delta}\in\mathbb{G}$.
  Note that ${g_\delta}$ is also sampled randomly during the key generation, so the distribution of ${g_\delta}$ is indistinguishable from the distribution of $\widetilde{g_\delta}$.

  Finally, we have to prove that the adversary cannot distinguish $h_\delta$ from $\widetilde{h_\delta}$.
  To prove this statement, we will consider successively two scenarios: (1) the adversary knows $(\delta_*, \beta)$ and (2) the adversary does not know the pair.

  Let us first assume that the adversary knows $(\delta_*, \beta)$.
  Note that we can rewrite the tuple $(h_1, \dots, h_{\delta_*}, \dots h_\nu)$ as $((g_1^{-1})^{r_\text{inv}}, \dots, (g_{\delta_*}^{-1}g_\beta)^{r_\text{inv}}, \dots (g_\nu^{-1})^{r_\text{inv}})$.
  As $g_1, \dots, g_\nu$ and $g_\beta$ are known to the attacker $((g_1^{-1}), \dots, (g_{\delta_*}^{-1}g_\beta), \dots (g_\nu^{-1}))$, these values can be interpreted as the public parameters of a DDH-based SH-PRG.
  The only condition on the public parameters is that they must be independently sampled generators.
  This condition is trivially satisfied for all $g_\delta$ with $\delta\neq\delta_*$, and we can observe that $g_{\delta_*}$ is independent of $g_\beta$ so $(g_{\delta_*}^{-1}g_\beta)$ is indistinguishable from random.

  Let $G_{DDH}^*$ then be a DDH-based SH-PRG with $((g_1^{-1}), \dots, (g_{\delta_*}^{-1}g_\beta), \dots (g_\nu^{-1}))$ as public parameters.
  We have:
  $$
    (h_1, \dots h_\nu)= ((g_1^{-1})^{r_\text{inv}}, \dots, (g_{\delta_*}^{-1}g_\beta)^{r_\text{inv}}, \dots (g_\nu^{-1})^{r_\text{inv}}) = G_{DDH}^*(r_\text{inv})
  $$

  In other words, $(h_1, \dots h_\nu)$ is the output of $G_{DDH}^*$ on the seed $r_\text{inv}$.
  The seed $r_\text{inv}$ is unknown to the adversary, so they cannot distinguish the real $(h_1, \dots h_\nu)$ from the randomly sampled $(\widetilde{h_1}, \dots \widetilde{h_\nu})$ outputted by the simulator \cite{corrigan-gibbs_riposte_2015}.

  Let us now assume that the adversary does not know $(\delta_*, \beta)$.
  If the adversary were able to distinguish $h_\delta$ from $\widetilde{h_\delta}$ without this knowledge, we could trivially build a distinguisher for an adversary knowing $(\delta_*, \beta)$.
  Since the $h_\delta$ can be formulated as the output of a DDH-based SH-PRG, this hypothetical distinguisher would break the DDH assumption (upon which the SH-PRG is built).
  By assumption, DDH is hard, so this adversary cannot distinguish real $h_\delta$ from the simulator output $\widetilde{h_\delta}$.
\end{proof}
\section{Appendix: Adapting Bunn et al. \cite{bunn_cnf-fss_2022} to DCF}
\label{subsec:adapt-bunn-et-al}

Bunn et al. \cite{bunn_cnf-fss_2022} presented a generic technique to build more efficient honest-majority DPF schemes from dishonest-majority schemes.
They applied their technique on two existing schemes: the trivial DPF (i.e., a secret-shared truth table) and on the PRG-based scheme from \cite{boyle_function_2015}.

While their constructions provides some of the best solutions in DPF (see Table \ref{tab:dpf}), they did not adapt them to comparison functions.
This leaves the literature with only weak DCF baselines \cite{boyle_function_2015,kumar_compact_2024}.
Thus, we propose to extend their work and build a DCF following the same intuition as their DPF.

\paragraph{Their DPF schemes}
To understand our adaptation, it is first important to understand their initial DPF.
Their main intuition is to represent a point function $f$ as the product of two point functions $f_a$ and $f_b$ (defined over small domains): $f(x) = f_a(\gamma) \times f_b(\delta)$ (with $(\gamma, \delta)$ the representation of $x$ in a $(\sqrt{N}\times\sqrt{N})$ grid).
Each function is shared using an existing DPF scheme.
However, additive secret shares cannot be multiplied without communications.

Instead of using additive secret sharing, Bunn et al. \cite{bunn_cnf-fss_2022} used replicated secret sharing: each party receives multiple DPF keys.
Under the honest-majority assumption, the shareholders can perform one offline multiplication on values shared via replicated secret sharing.

To sum up, Bunn et al. \cite{bunn_cnf-fss_2022} used replicated secret-sharing and the honest-majority assumption to build a technique reducing the key size of existing dishonest-majority schemes from $O(\sqrt[k]{N})$ to $O(\sqrt[2k]{N})$.
As reported in Table \ref{tab:dpf}, this technique leads to a PRG-based DPF with $O(\sqrt[4]{N})$ key size and an information-theoretic DPF with $O(\sqrt{N})$ key size.

\paragraph{Our adapted DCF}
While point functions are the product of two point functions, we can decompose comparison functions using three sub-functions (defined over smaller domains): one point function $f_a$ and two comparison functions $f_b, f_c$.
A comparison function $f$ can be expressed as $f(x) = f_a(\gamma) \times f_b(\delta) + f_c(\gamma)$ with
\begin{itemize}
  \item $f(x) = \beta$ if $x\le \alpha$, $0$ otherwise; with $x\in\{1\dots N\}$ and  $\alpha = \gamma_* \times \lceil\sqrt{N}\rceil + \delta_*$.
  \item $f_a(\gamma) = \beta$ if $\gamma = \gamma_*$, $0$ otherwise; with $\gamma\in\{1\dots \lceil\sqrt{N}\rceil\}$.
  \item $f_b(\delta) = 1$ if $\delta \le \delta_*$, $0$ otherwise; with $\delta\in\{1\dots \lceil\sqrt{N}\rceil\}$.
  \item $f_c(\gamma) = \beta$ if $\gamma < \gamma_*$, $0$ otherwise;with $\gamma\in\{1\dots \lceil\sqrt{N}\rceil\}$.
\end{itemize}

The multiplication of $f_a$ and $f_b$ enables to represent the row $\gamma_*$ containing a segment of $\beta$ values and a segment of 0 values.
Furthermore, $f_c$ covers all the rows before $\gamma_*$ full of $\beta$ values.

Like in \cite{bunn_cnf-fss_2022}, we rely on honest-majority and replicated secret sharing to perform the multiplication of $f_a$ and $f_b$.
Due to space limitations, we cannot include the detailed algorithms.

Similarly to \cite{bunn_cnf-fss_2022}, we can apply this technique either on the PRG-based schemes of \cite{boyle_function_2015} or on the trivial FSS schemes (i.e., sharing a truth table).
The PRG-based DCF has $O(\sqrt[4]{N} \cdot \sqrt{q^{p^m}} \cdot \binom{p-1}{m} \cdot(\lambda + \log q))$ key size, while the information-theoretic DCF has $O(\sqrt{N}\cdot \binom{p-1}{m} \cdot \log q)$ key size.

Since our adapted DCF only adds a sub-DCF (assumed to be secure) compared to the initial DPF, the security proofs can be easily adapted from \cite{bunn_cnf-fss_2022}.

%
%
%
\bibliographystyle{splncs04}
\bibliography{ref}

\end{document}